\renewcommand\footnotetextcopyrightpermission[1]{}
\acrodef{CSI}{channel state information}
\acrodef{CRN}{chemical reaction network}
\acrodef{GRN}{gene-regulatory network}
\acrodef{BM}{Boltzmann machine}
\acrodef{MC}{Molecular Communication}
\acrodef{IoBNT}{Internet of Bio-Nano Things}
\acrodef{ML}{maximum-likelihood}
\acrodef{MAP}{maximum-a-posteriori}
\acrodef{ISI}{inter-symbol interference}
\acrodef{BER}{bit error rate}
\acrodef{CTMC}{continuous-time Markov Chain}
\acrodef{SNR}{signal-to-noise ratio}
\acrodef{RV}{random variable}
\acrodef{BCSK}{binary concentration shift keying}
\acrodef{FVBM}{fully visible Boltzmann machine}
\acrodef{iid}[i.i.d.]{independent and identically distributed}
\acrodef{SCRN}{stochastic chemical reaction network}
\acrodef{iff}{if and only if}
\newcommand{\xhat}{\hat{x}}
\newcommand{\Xhat}{\hat{X}}
\newcommand{\y}{\mathbf{y}}
\newcommand{\Y}{\mathbf{Y}}
\newcommand{\z}{\mathbf{z}}
\newcommand{\Z}{\mathbf{Z}}
\newcommand{\partition}{\mathcal{Z}}
\newcommand{\expectation}[2]{\mathbb{E}_{#1}\{#2\}}
\newcommand{\Xhaton}{\hat{X}^{\text{ON}}}
\newcommand{\Xhatoff}{\hat{X}^{\text{OFF}}}
\newcommand{\XtrueOn}{X^{\text{ON}}_{\text{pilot}}}
\newcommand{\XtrueOff}{X^{\text{OFF}}_{\text{pilot}}}
\newcommand{\Yon}[1]{Y^{\text{ON}}_{#1}}
\newcommand{\Yoff}[1]{Y^{\text{OFF}}_{#1}}
\newcommand{\Won}{W^{\text{ON}}}
\newcommand{\Woff}{W^{\text{OFF}}}
\newcommand{\threshold}{\nu}
\newcommand{\Nr}{N_{\text{r}}}
\newcommand{\Nrb}{N_{\text{r,b}}}
\newcommand{\Nw}{N_{\text{w}}}
\newcommand{\Nwa}{N_{\text{w}}^{\text{ON}}}
\newcommand{\Nwi}{N_{\text{w}}^{\text{OFF}}}
\newcommand{\wxy}{w_{\text{x,y}}}
\newcommand{\cx}{c_x}
\newcommand{\cn}{c^{\text{n}}}
\newcommand{\deltac}{\Delta_{\text{c}}}
\newcommand{\rrc}[1]{k_{\text{#1}}} %
\newcommand{\xtrue}{ x_{ \text{pilot} } }
\newcommand{\meter}{\text{m}}
\newcommand{\second}{\text{s}}
\newcommand{\crn}[1]{\mathcal{C}_{\mathrm{#1}}}
\newcommand{\transpose}{\intercal}
\newcommand{\W}{\mathbf{W}}
\newcommand{\V}{\mathbf{V}}
\newtheorem{theorem}{Theorem}
\begin{document}

\title{Stochastic Chemical Reaction Networks for MAP Detection in Cellular Receivers}

\author{Bastian Heinlein\textsuperscript{*}, Lukas Brand\textsuperscript{*}, Malcolm Egan\textsuperscript{\dag}, Maximilian Schäfer\textsuperscript{*}, Robert Schober\textsuperscript{*}, and Sebastian Lotter\textsuperscript{*}}
\affiliation{%
    \institution{\textsuperscript{*}Friedrich-Alexander-Universität Erlangen-Nürnberg, Erlangen, Germany}\country{}
}
\affiliation{%
    \institution{\textsuperscript{\dag}Univ. Lyon, Inria, INSA Lyon, Villeurbanne, France}\country{}
}
\email{{bastian.heinlein,lukas.brand,max.schaefer,robert.schober,sebastian.g.lotter}@fau.de,malcom.egan@inria.fr}

\renewcommand{\shortauthors}{Heinlein et al.}

\begin{abstract}
In order to fully exploit the potential of molecular communication (MC) for intra-body communication, practically implementable cellular receivers are an important long-term goal. A variety of receiver architectures based on chemical reaction networks (CRNs) and gene-regulatory networks (GRNs) has been introduced in the literature, because cells use these concepts to perform computations in nature. However, practical feasibility is still limited by stochastic fluctuations of chemical reactions and long computation times in GRNs. Therefore, in this paper, we propose two receiver designs based on stochastic CRNs, i.e., CRNs that perform computations by exploiting the intrinsic fluctuations of chemical reactions with very low molecule counts. The first CRN builds on a recent result from chemistry that showed how Boltzmann machines (BMs), a commonly used machine learning model, can be implemented with CRNs. We show that BMs with optimal parameter values and their CRN implementations can act as maximum-a-posteriori (MAP) detectors. Furthermore, we show that BMs can be efficiently trained from simulation data to achieve close-to-MAP performance. While this approach yields a fixed CRN once deployed, our second approach based on a manually designed CRN can be trained with pilot symbols even within the cell and thus adapt to changing channel conditions. We extend the literature by showing that practical robust detectors can achieve close-to-MAP performance even without explicit channel knowledge.
\end{abstract}

\maketitle

\pagestyle{plain}

\section{Introduction}
\label{sec:introduction}
\ac{MC} is a new paradigm for information exchange in conditions that are unfavorable for traditional wireless communication, e.g., at nano-scale or inside the human body. \ac{MC} has great potential, for example in the context of the Internet of Bio-Nano Things which will enable groundbreaking improvements in healthcare by employing a network of connected nano-sensors within the body to diagnose and treat diseases \cite{haselmayr2019_MC_6G}.

For both theoretical work \cite{eckford_survey} and experimental testbeds \cite{lotter_testbeds} progress has been made to realize this ambitious vision. A further leap forward could be possible if practical implementations of receivers for cell-to-cell communication were found.

In contrast to electromagnetic wave-based based mobile communication, there are no general-purpose processors available to implement receivers for \ac{MC} on the scale of individual cells. Yet, in nature, cells do communicate with each other using signaling molecules. Often, cells have ligand-binding receptors on the surface and use \acp{CRN}, i.e., networks of interacting chemical species, to perform computations. For example, bacteria use \acp{CRN} to determine whether to tumble or to move forward in the context of chemotaxis. Also, receptor states can cause long-term changes by interacting with the cells' \acp{GRN}, i.e., by influencing which genes are expressed depending on the sensed environmental conditions.

Therefore, it is a natural choice to formulate signal detection problems in such a way that \acp{CRN} and \acp{GRN} can be used to perform the computations required for artificial \ac{MC}. 

In \cite{kuscu2018_ml_detection,chou2019designing_molecular_circuits,egan_2018_detector}, \acp{CRN} were designed manually for a given channel to implement receiver components. However, these approaches either do not account for the stochastic fluctuations of \acp{CRN}, require many molecules, or are suboptimal compared to \ac{MAP} detectors.

The authors of \cite{anderson2021reaction} used \acp{CRN} to implement feed-forward neural networks. This idea is especially interesting for \ac{MC} if analytical channel models are not available or their parameters are unknown. However, the resulting \acp{CRN} are very complex and the impact of stochastic fluctuations has not been considered in \cite{anderson2021reaction}.

The use of \acp{GRN} to perform computations in general has been reviewed in \cite{saltepe_biosensors} and proposed for receiver design in \cite{unluturk2015genetically_engineered_bacteria}. While this approach is especially interesting because it can exploit natural mechanisms already present in cells and it has been shown that logical functions can be implemented via \acp{GRN}, very long time scales are needed to perform complex computations. A recent review highlighted that using transcriptional elements is only feasible for very simple receiver implementations in the near future \cite{femminella2022_implementation_issues}.

Finally, several concepts for implementing molecular machine learning based on \acp{CRN}, bacterial multi-species communication, and Calcium signaling were proposed in \cite{balasubramaniam2022_molecule_ml}. Potentially, these approaches could be also applied for receiver design. However, it is unclear how complex the resulting architectures would be and how long the computations would take.

A common issue for all previously mentioned approaches is the diversity of transmission channels. Each channel involves different parameters and possibly time-variant parameter values. Yet, most model-based approaches for receiver design in the literature are not concerned with estimating correct parameters and adapting to changing channel conditions. While the existing learning-based approaches could in principle remedy this issue, it remains unclear whether their training would be fast enough to adapt to changing channel conditions.

In this paper, we propose the practical implementation of \ac{MAP} detectors for cellular receivers with ligand-binding receptors using two different \acp{CRN}. While the stochastic fluctuations of molecule counts are usually considered as noise source in \acp{CRN}, our designs exploit this randomness to perform computations. To emphasize this, we call the proposed \acp{CRN} ''stochastic''.

For the first \ac{CRN}, we start by approximating the joint distribution of the transmitted symbol and the receptor states using a \ac{BM}. \acp{BM} are commonly used graphical models in machine learning to approximate arbitrary joint probability distributions of binary \acp{RV} \cite{mackay2003information}. Then, we exploit the methods reported in \cite{poole2017_chemical_BMs} to represent \acp{BM} via stochastic \acp{CRN}. In order to obtain posterior estimates for the transmitted symbol, i.e., to perform detection, we then condition the \ac{BM}, or rather its \ac{CRN} representation, on the receptor states.

The second \ac{CRN} proposed in this paper is not derived from \acp{BM}, but is designed directly for our purposes. It has a lower complexity and can adapt to changing channel conditions over time by learning efficiently from pilot symbols. We demonstrate the performance of the proposed approach for an \ac{MC} channel subject to time-varying background noise levels.

The remainder of this paper is organized as follows. In Section \ref{sec:BM_CRN_introduction}, we review the fundamentals of \acp{BM}, \acp{CRN}, and how \acp{BM} can be represented via \acp{CRN}. In Section \ref{sec:system_model}, we introduce the considered system model and the corresponding \ac{MAP} detector. In Sections \ref{sec:BM_MAP} and \ref{sec:low_complexity_MAP}, we introduce the proposed receiver designs which are then evaluated and compared to the optimal \ac{MAP} detector in Section \ref{sec:performance_evaluation}. Finally, we summarize our main findings and potential future work in Section \ref{sec:conclusion}.
\section{Boltzmann Machines and Chemical Reaction Networks}
\label{sec:BM_CRN_introduction}
\subsection{Boltzmann Machines}
In order to approximate a joint distribution of $N$ binary \acp{RV} with probability mass function $q_{\Z}(\z)$, a \ac{BM} with at least $M \geq N$ nodes is required. $N$ of these nodes are each identified with a binary \ac{RV} of the original distribution. The additional $M-N$ nodes are ''hidden'', i.e., they do not correspond to a specific \ac{RV} but simply allow for additional degrees of freedom in the \ac{BM} in order to achieve a higher approximation quality. However, as we will show in Section \ref{sec:BM_MAP}, no hidden nodes are required for the \ac{MAP} detection task in this paper. Therefore, we restrict ourselves to the simpler case of a \ac{BM} without hidden nodes, which we call \ac{FVBM}.

An \ac{FVBM} consists of $N$ nodes, each of which corresponds to a binary \ac{RV} $Z_i,i\in\{1,\dots,N\}$, which we collect in a vector $\Z=[Z_1,\ldots,Z_N]^\transpose$, where $[\cdot]^\transpose$ denotes the transpose operator. The probability mass function of the \ac{FVBM} nodes is then given by
\begin{equation}
    p_{\Z}(\z) = \frac{1}{\partition} \exp\left(\frac{1}{2} \z^\transpose \W \z + \z^\transpose \theta \right).
\end{equation}
Here, $\theta \in \mathbb{R}^{N \times 1}$ and $\W \in \mathbb{R}^{N \times N}$ denote respectively the vector of biases and a symmetric weight matrix with all-zero diagonal, which captures the correlations between the nodes. $\partition$ is a normalization constant that ensures that $p_{\Z}(\z)$ is a probability distribution.

The state of node $i$ depends on its associated bias $\theta_i$ and the current state of all other nodes $j \neq i$ with non-zero correlations to node $i$, captured by weight matrix entries $W_{i,j}$. Formally, the probability that $Z_i=1$ is given by 
\begin{equation}
    p_{Z_i|\Z_{-i}}(Z_i=1|\Z_{-i}=\z_{-i}) = \sigma\left(\theta_i + \sum_{j \neq i} W_{i,j} z_j\right) \label{eq:bm_conditional_probability},
\end{equation}
where $\Z_{-i}$ is the vector of all \acp{RV} except the one associated with node $i$, $\z_{-i}$ contains their observed realization, and $\sigma(x)=\frac{1}{1+e^{-x}}$. Eq. (\ref{eq:bm_conditional_probability}) is for example well known from the Gibbs sampling algorithm \cite{mackay2003information}.

When using a \ac{BM} to approximate $q_{\Z}(\z)$, $\W$ and $\theta$ can be learned from the first- and second-order moments $\expectation{q}{\z}$ and $\expectation{q}{\z\z^\transpose}$, respectively, where $\expectation{q}{\cdot}$ denotes the expectation operator for a probability distribution $q_{\Z}(\z)$. For an introduction to learning weights and biases for \acp{BM}, we refer to \cite{mackay2003information}.

\subsection{Chemical Reaction Networks}
As previously mentioned, \acp{BM} can be realized as \acp{CRN} \cite{poole2017_chemical_BMs}. Formally, a \ac{CRN} $\crn{}=(\mathcal{S},\mathcal{R},\mathbf{k})$ consists of a set of species $\mathcal{S}$, a set of reactions $\mathcal{R}$ defined over $\mathcal{S}$, and a vector of reaction rate constants $\mathbf{k}$.
\\
A chemical reaction $r \in \mathcal{R}$ converts molecules into other molecules. For mass-action kinetics, the propensity of reaction $r$, i.e., how likely it occurs per unit time, is proportional to the reaction rate constant $k_r>0$ and the number of available reactants in the considered reaction network. 

Consider for example the reaction
\begin{equation}
    \Won + \Xhaton \xrightarrow{k_r} \Won + \Xhatoff\label{eq:example_reaction}.
\end{equation}
Here, the propensity of reaction $r$ is given by the product of the reaction rate constant, the number of $\Won$ molecules, and the number of $\Xhaton$ molecules. Thus, if there is no $\Xhaton$ or no $\Won$ molecule, the reaction cannot happen at all.

In the remainder of this paper, we often write that a molecule can be in one of two states, namely ON or OFF. This is to represent binary states, e.g., of a \ac{RV}, by molecules. Chemically speaking, the ON-version of a molecule might for example contain a phosphor group that is not contained in the complementary OFF species. Otherwise, the molecules are identical. We also say that the ON species is ''active'' whereas the OFF species is ''inactive''. 

Thus, we say for example that the $\Won$ molecule deactivates the $\Xhaton$ molecule in reaction $r$ in (\ref{eq:example_reaction}).
To implement a \ac{BM} using a \ac{CRN}, we exploit the fact that a \ac{CRN} can be described by a \ac{CTMC} \cite{poole2017_chemical_BMs}. We assume a \ac{CRN} with species set $\{Z_1^{\mathrm{ON}},Z_1^{\mathrm{OFF}},\dots,Z_N^{\mathrm{ON}},Z_N^{\mathrm{OFF}}\}$ and demand that there is exactly either one $Z_i^{\mathrm{ON}}$ or one $Z_i^{\mathrm{OFF}}$ molecule for $\forall i$ at any point in time. Then, we can say that the molecule $Z_i$ is ON if the $i$-th node of the corresponding \ac{BM} has the value $1$ whereas the $Z_i$ molecule is OFF if the node has value $0$ at a given point in time. By defining appropriate reactions that activate or deactivate molecules, one can ensure that the \ac{CTMC} describing the \ac{CRN} eventually reaches a stationary distribution and that this stationary distribution is the same as the one of the corresponding \ac{BM}. In this case, the \ac{CRN} is said to implement the \ac{BM}.

This idea has been introduced and formalised in \cite{poole2017_chemical_BMs}. In fact, in \cite{poole2017_chemical_BMs} three representations of \acp{BM} using \acp{CRN} are provided. Two of them are exact but require a number of reactions $|\mathcal{R}|$ that scales exponentially with the number of nodes of the \ac{BM}, where $|\mathcal{X}|$ denotes the cardinality of a set $\mathcal{X}$. The third one is an approximation of the \ac{BM}, which requires much fewer reactions at the cost of a mismatch between the stationary distributions of the \ac{CRN} and the \ac{BM}.
\section{System Model}
\label{sec:system_model}
\begin{figure}
    \centering
    \includegraphics[width=0.5\textwidth]{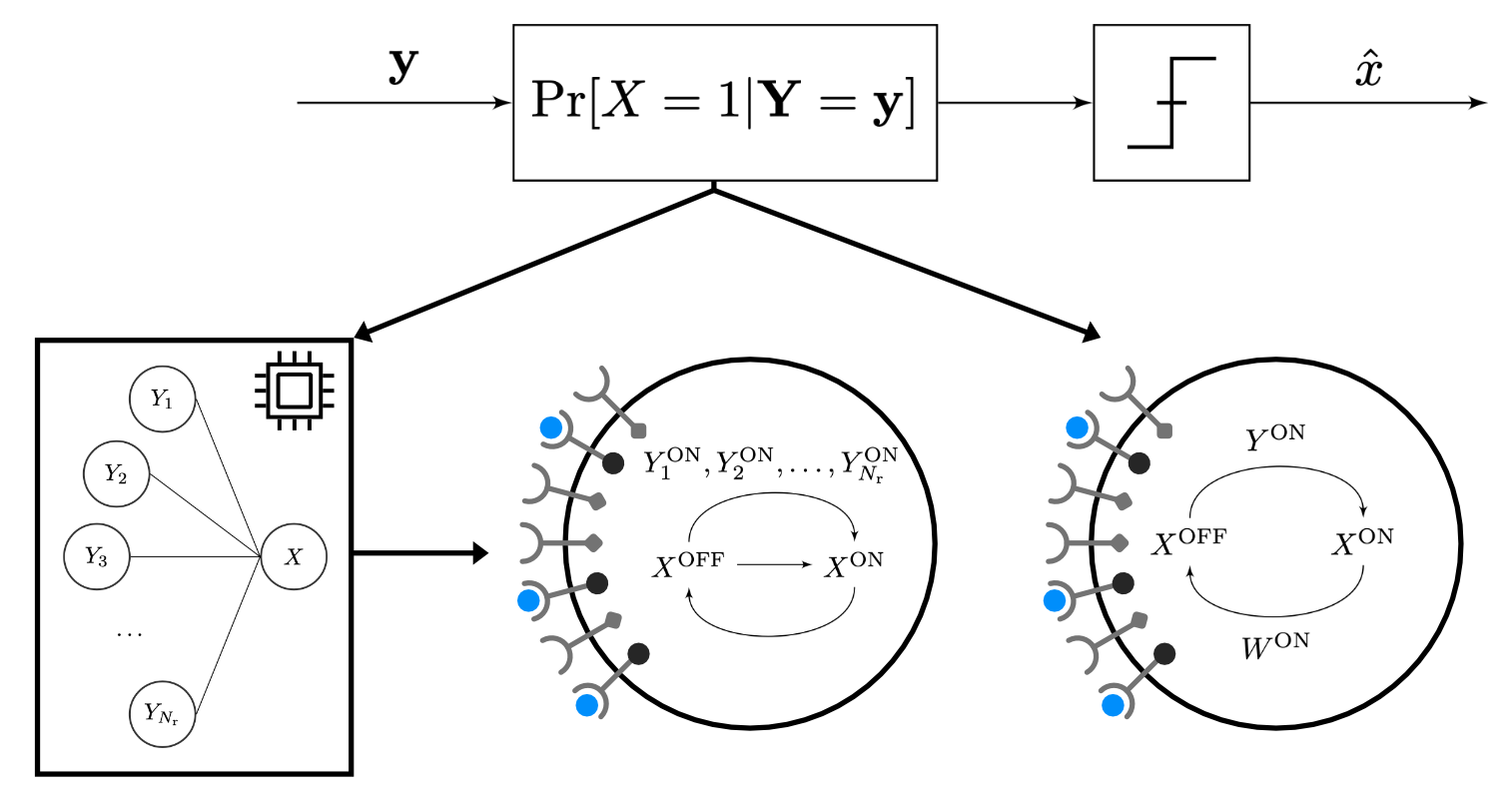}
    \vspace*{-7mm}
    \caption{Proposed receiver model. The probabilistic model can be implemented in silico using a \ac{BM} (bottom left). Its in vivo implementation could be a \ac{CRN} based on the \ac{BM} (bottom center) or a directly designed \ac{CRN} (bottom right).}
    \label{fig:system_model}
    \vspace*{-5mm}
\end{figure}
As our focus is the detection process, we keep the modulation scheme and channel model simple. Specifically, we assume the transmission of a binary source symbol $x \in \{0,1\}$ via \ac{BCSK} and the absence of \ac{ISI}. The latter holds for sufficiently long symbol intervals or enzymatic ligand degradation \cite{noel2014_enzymes}. Considering that existing receiver implementations may require extremely long decoding times of up to several hours per symbol \cite{unluturk2015genetically_engineered_bacteria,femminella2022_implementation_issues}, choosing long intervals to avoid \ac{ISI} would not be a performance bottleneck compared to existing approaches.

A cellular receiver, as depicted in Fig. \ref{fig:system_model}, senses its environment through $\Nr$ cell surface receptors. Depending on whether receptor $i$ is bound to a ligand, its intra-cellular domain might have different chemical properties. Thus, we represent receptor $i$'s intra-cellular domain by different chemical species depending on its state, namely by $\Yon{i}$ if the receptor is bound to a ligand and by $\Yoff{i}$ otherwise. For our purposes, we assume that the receptor states are sampled once and then stored in a vector $\y \in \{0,1\}^{\Nr}$ where $y_i=1$ if we observe $\Yon{i}$ and $y_i=0$ for $\Yoff{i}$. Mathematically, the considered \ac{MC} channel is characterized by the joint distribution $q_{\Y,X}(\y,x)$ where $X$ denotes the binary \ac{RV} corresponding to the transmitted symbol and $\Y$ contains the binary \acp{RV} $Y_i$ corresponding to the different receptors. 

A \ac{MAP} detector for \ac{BCSK} would compute the estimated symbol $\xhat$ as
\begin{equation}
    \xhat = \xhat^{\mathrm{MAP}} = \begin{cases}
        1 & ,\;\;\text{if} \Pr[X=1|\Y=\y]\geq\frac{1}{2}\\
        0 & ,\;\;\text{otherwise}
    \end{cases}.\label{eq:xhat_probabilistic_model}
\end{equation}
When the detection \eqref{eq:xhat_probabilistic_model} is performed by a deterministic device, $\xhat^{\mathrm{MAP}}$ can be obtained directly.
However, deterministic devices may be difficult to realize on the single cell level.
Hence, in this paper we assume that, rather than observing $\xhat^{\mathrm{MAP}}$ directly, independent samples of a generative\footnote{We use the term ''generative model'' to refer to the generation of samples from a conditional probability distribution $\Pr[\hat{X}=\xhat|\Y=\y]$, which is not to be confused with the different usage of the same term in the context of statistical classification.}, binary probability model, i.e., a binary \ac{RV} $\Xhat$ with probability mass function $f_{\Xhat}(\y)$, are observed and $\xhat$ is obtained by averaging these samples.
In the special case that $f_{\Xhat}(\y)=\Pr[X=1|\Y=\y]$, according to the law of large numbers $\xhat = \xhat^{\mathrm{MAP}}$ when sufficiently many samples are drawn; yet, we will see that there exist $f_{\Xhat}(\y) \neq \Pr[X=1|\Y=\y]$ for which $\xhat=\xhat^{\mathrm{MAP}}$, i.e., the choice of the probability model for $\Xhat$ that induces \ac{MAP} detection is not unique.

In the remainder of this paper, we assume that all receptors have identical chemical properties, such that $\Pr[X=1|\Y=\y]$ depends only on the number of bound receptors $\Nrb=\sum_{i=1}^{\Nr} y_i$. 
We further assume that the likelihoods $\Pr[\Nrb=n|X=x]$ have exactly one local maximum $n_{\mathrm{m},x}$ and are monotonically increasing for $n \leq n_{\mathrm{m},x}$ and monotonically decreasing for $n > n_{\mathrm{m},x}$.

For these quite general assumptions, the \ac{ML} detector is a simple threshold detector, i.e.,
\begin{equation}
    \xhat^{\mathrm{ML}} = \begin{cases}
        1 & \text{,\;\;\;if } \Nrb \geq \threshold \\
        0 & \text{,\;\;\;otherwise}
    \end{cases},
    \label{eq:map_detector}
\end{equation}
for some threshold $\threshold \in \mathbb{N}_0$. Moreover, we assume equiprobable symbols $x$, such that the \ac{ML} and the \ac{MAP} detectors coincide.

As already alluded to, it is not necessary to perfectly know $\Pr[X=x|\Y=\y]$ in order to make \ac{MAP} decisions. Instead, it is sufficient to find a {\em surrogate} probability mass function $f_{\Xhat}(n) \in [0,1]$ that fulfills $f_{\Xhat}(n)\geq\frac{1}{2}$ if and and only if $\Pr[X=1|\Nrb=n]\geq\frac{1}{2}$.
Such $f_{\Xhat}(n)$, when employed as surrogate model of the true posterior distribution, allows for \ac{MAP} detection.
We say that such a probability model has the \textit{MAP property}.
The \ac{MAP} property will be useful in the following when different generative probability models are analyzed with respect to their potential as \ac{MAP} detectors.
\section{Boltzmann Machine-Inspired MAP Detectors}
\label{sec:BM_MAP}
In this section, we explore \ac{BM}-based \ac{MAP} detection.
To this end, we first verify that \acp{BM} can be used as generative models to implement \ac{MAP} detection.
Then, we extend this idea to \ac{BM}-inspired \acp{CRN}.

\subsection{MAP-Capability of Boltzmann Machines}
In order to use a \ac{BM} for \ac{BCSK} detection, we set $\Z=[\hat{X} \;\;\; \Y]^\transpose$.

\begin{theorem}\label{thm:bm_map}
For a known threshold $\nu$ and properly chosen parameters, \acp{BM} have the \ac{MAP} property. \label{th:bm_map_capability}
\end{theorem}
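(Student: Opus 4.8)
The plan is to exploit the fact that the \ac{BM} conditional \eqref{eq:bm_conditional_probability} already has the form of a sigmoid, together with the elementary property $\sigma(x)\geq\tfrac12\iff x\geq0$, so that establishing the \ac{MAP} property reduces to aligning the sign change of the sigmoid's argument with the threshold $\threshold$. With $\Z=[\Xhat\;\;\;\Y]^\transpose$, I identify $\Xhat$ with node $1$ and each receptor $Y_i$ with node $i+1$. Conditioning the \ac{BM} on $\Y=\y$ then yields directly, via \eqref{eq:bm_conditional_probability},
\begin{equation}
    f_{\Xhat}(\y)=\Pr[\Xhat=1\mid\Y=\y]=\sigma\!\left(\theta_1+\sum_{i=1}^{\Nr} W_{1,i+1}\,y_i\right).
\end{equation}
A key observation is that the remaining parameters $\theta_{i+1}$ and $W_{i+1,j+1}$ govern only the joint law of $\Y$ and do not enter this conditional at all, so they may be fixed arbitrarily (e.g.\ to zero) without affecting the detection behavior.

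Next I would invoke the assumption of chemically identical receptors to motivate choosing a common coupling weight $W_{1,i+1}=w$ for all $i$ with some $w>0$ (and $W_{i+1,1}=w$ so that $\W$ remains symmetric). The argument of the sigmoid then collapses to a function of the bound-receptor count alone, $f_{\Xhat}(\y)=\sigma(\theta_1+w\,\Nrb)$, so that $f_{\Xhat}$ depends on $\y$ only through $\Nrb$. This matches the structure of the true posterior $\Pr[X=1\mid\Nrb=n]$, which by assumption also depends on $\y$ only through $\Nrb$.

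It then remains to choose the bias $\theta_1$ so that the induced threshold matches $\threshold$. Since $\sigma$ is strictly increasing and $\sigma(x)\geq\tfrac12\iff x\geq0$, we have $f_{\Xhat}\geq\tfrac12\iff\Nrb\geq-\theta_1/w$. Setting $\theta_1=-w\,(\threshold-\tfrac12)$ turns this, upon using the integrality of $\Nrb$, into the equivalence $f_{\Xhat}\geq\tfrac12\iff\Nrb\geq\threshold$. Because the \ac{ML}/\ac{MAP} detector \eqref{eq:map_detector} is precisely the threshold rule $\Pr[X=1\mid\Nrb=n]\geq\tfrac12\iff n\geq\threshold$, the so-parametrized \ac{BM} satisfies $f_{\Xhat}(n)\geq\tfrac12\iff\Pr[X=1\mid\Nrb=n]\geq\tfrac12$, i.e.\ it has the \ac{MAP} property.

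The step I expect to require the most care is the integrality/boundary argument in the final paragraph. One must verify that the half-integer offset in $\theta_1$ places the sign change of the sigmoid's argument strictly between the integers $\threshold-1$ and $\threshold$, so that no admissible value of $\Nrb$ lands exactly on the $\tfrac12$-level set of $\sigma$ and the ``if and only if'' holds without ambiguity; equivalently, I would check the two boundary cases $\Nrb=\threshold$ and $\Nrb=\threshold-1$ explicitly. Everything else is an affine reparametrization of a monotone link function and should go through directly.
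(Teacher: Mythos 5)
Your proof is correct and follows essentially the same route as the paper: identify the BM conditional \eqref{eq:bm_conditional_probability} as a sigmoid, set all couplings between $\hat{X}$ and the receptor nodes to a common constant $w$, and choose the bias $\theta_1=-w\,(\threshold-\tfrac12)$ so that the half-integer offset places the decision boundary strictly between $\threshold-1$ and $\threshold$. Your version is in fact slightly more careful than the paper's, since you make explicit both the requirement $w>0$ (which the paper leaves implicit in saying ``any constant'') and the observation that the receptor-internal weights and biases are irrelevant to the conditional.
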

\begin{proof}
Let $Z_i=\hat{X}$, $\Z_{-i}=\Y$ in (\ref{eq:bm_conditional_probability}) and  $W_{i,j} = \wxy$ for all $i,j$, $i\neq j$, where $\wxy$ is any constant independent of $X$ and $\Y$. Then,
\begin{align}
    p_{\hat{X}|\Y}(\hat{X}=1|\Y=\y) &= \sigma\left( \theta_i + \sum_{i=1}^{N_r} \wxy y_i \right) \\
    &= \sigma\left(\theta_i + \Nrb \wxy\right).
\end{align}
Setting $\theta_i=-(\threshold-\frac{1}{2})\wxy$, the \ac{MAP} property follows, since $p_{\hat{X}|\Y}(\hat{X}=1|\Y=\y)>0.5$ if $\Nrb\geq\threshold$ and $p_{\hat{X}|\Y}(\hat{X}=1|\Y=\y)<0.5$ if $\Nrb < \threshold$.
\end{proof}
Theorem~\ref{thm:bm_map} verifies that MAP detection can be realized by a \ac{BM} if $\nu$ is known.

\subsection{Representation via Chemical Reaction Networks}
Among the mappings of \acp{BM} to \acp{CRN} proposed in \cite{poole2017_chemical_BMs}, the \textit{Edge Species Mapping} is {\em exact} and the corresponding \ac{CRN} is {\em trainable}.
However, applied to the \ac{BM} defined in the previous section, it incurs $(\Nr+1) \cdot 2^{\Nr+1}$ chemical reactions, each with up to $2\Nr+1$ reactants, which, even for a few dozen receptors, would be infeasible to realize in practice. 
Hence, we resort to the \textit{Taylor Mapping} \cite{poole2017_chemical_BMs} instead.
This mapping is {\em approximate} and, to the best of our knowledge, {\em no training} algorithm for the resulting \ac{CRN} $\crn{TM} = (\mathcal{S}_{\mathrm{TM}},\mathcal{R}_{\mathrm{TM}},\mathbf{k}_{\mathrm{TM}})$ has been proposed, yet.
However, $\crn{TM}$ incurs only $|\mathcal{R}_{\mathrm{TM}}|=2 \Nr^2 + 4 \Nr + 2$ reactions, each with at most two reactants.
Hence, it can be realized more easily than the {\em Edge Species Mapping}-based \ac{CRN}.
Furthermore, to perform inference, only a subset of $\mathcal{R}_{\mathrm{TM}}$ is required; namely only those reactions $\mathcal{R}_{\mathrm{TM},\hat{X}} \subseteq \mathcal{R}_{\mathrm{TM}}$ that activate or deactivate the molecule identified with the $\hat{X}$-node of the \ac{BM}, i.e., $Z_1$.

$\mathcal{R}_{\mathrm{TM},\hat{X}}$ encompasses only the following reactions
\begin{equation}
    \begin{split}
        \Xhatoff &\autorightleftharpoons{\small{$k$}}{\small{$k(1+|\theta_x|)$}} \Xhaton\\
        \Yon{1} + \Xhatoff &\xrightarrow{k W_{y_1,x}} \Yon{1} + \Xhaton\\
        &\dots\\
        \Yon{N_r} + \Xhatoff &\xrightarrow{k W_{y_{N_r},x}} \Yon{N_r} + \Xhaton\label{eq:BM_CRN},
    \end{split}
\end{equation}
and we denote by $\mathcal{C}_{\mathrm{TM},\hat{X}} = (\mathcal{S}_{\mathrm{TM},\hat{X}},\mathcal{R}_{\mathrm{TM},\hat{X}},\mathbf{k}_{\mathrm{TM},\hat{X}})$ the \ac{CRN} that results from restricting $\crn{TM}$ to the set of species, reactions, and reaction rate constants involved in \eqref{eq:BM_CRN}.
In \eqref{eq:BM_CRN}, $\theta_x$ is equal to the bias associated with the $\hat{X}$-node, $W_{y_i,x}$ is equal to the weight between the $Y_i$- and the $\hat{X}$-node of the BM, i.e., $W_{i+1,1}$ in \eqref{eq:bm_conditional_probability}, and $k$ is an arbitrary positive scaling factor for the reaction rate constants. $\Xhaton$ and $\Xhatoff$ are the chemical species associated with node $\hat{X}$. At any point in time, there is either a single $\Xhaton$ molecule or a single $\Xhatoff$ molecule. When no receptors are activated, i.e., $Y_i=0$ for all $i$, no $\Yon{i}$ molecules exist, and $\Xhaton$ switches to $\Xhatoff$ more likely than vice versa (assuming $\theta_x \neq 0$). On the other hand, the more receptors are activated, the more likely an $\Xhatoff$ molecule is converted to an $\Xhaton$ molecule by one of the reactions in \eqref{eq:BM_CRN}.

\begin{theorem}
    \label{th:bm_map_mc}
    $\crn{TM,X}$ preserves the MAP property of the BM.
\end{theorem}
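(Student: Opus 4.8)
The plan is to fix the observation $\y$, which fixes $\Nrb=n$, and to exploit the fact that in the reactions \eqref{eq:BM_CRN} every $\Yon{i}$ molecule appears identically on both sides: the activated-receptor species act purely as catalysts, so their counts are conserved. Consequently, with $\y$ held fixed, the only quantity that evolves is the state of the $\hat{X}$-node, and $\crn{TM,X}$ collapses to a two-state \ac{CTMC} on $\{\Xhatoff,\Xhaton\}$ (recall there is always exactly one such molecule). The surrogate model is then $f_{\Xhat}(n)=\Pr[\Xhat=1]$ evaluated under the stationary distribution of this chain, and it suffices to establish the \ac{MAP} property, i.e., that $f_{\Xhat}(n)\geq\tfrac12$ holds \ac{iff} $n\geq\threshold$.

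Next I would read off the two transition propensities. The OFF$\to$ON propensity collects the forward rate $k$ of the reversible reaction together with one catalytic contribution $k\wxy$ per activated receptor; using $W_{y_i,x}=\wxy$ for all $i$ and $\Nrb=n$, this gives $\lambda=k(1+n\wxy)$, while the ON$\to$OFF propensity is the backward rate $\mu=k(1+|\theta_x|)$. For a two-state \ac{CTMC} the stationary ON-probability is $\lambda/(\lambda+\mu)$, so that
\begin{equation*}
    f_{\Xhat}(n)=\frac{1+n\wxy}{2+n\wxy+|\theta_x|}.
\end{equation*}
A direct rearrangement (with $\wxy>0$, as required for positive rates) then yields $f_{\Xhat}(n)\geq\tfrac12 \iff n\wxy\geq|\theta_x|$. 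Substituting the parameter choice from the proof of Theorem~\ref{thm:bm_map}, namely $\theta_x=-(\threshold-\tfrac12)\wxy$, gives $|\theta_x|/\wxy=\threshold-\tfrac12$, hence $f_{\Xhat}(n)\geq\tfrac12 \iff n\geq\threshold-\tfrac12$. Because $n\in\mathbb{N}_0$ and $\threshold\in\mathbb{N}_0$, the half-integer offset is harmless and this is equivalent to $n\geq\threshold$, which is exactly the \ac{MAP} property.

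The main obstacle is not the algebra but the two structural justifications preceding it. First, one must argue carefully that the $\Yon{i}$ species are conserved catalysts, so that $\crn{TM,X}$ genuinely reduces to a well-defined, irreducible two-state chain whose unique stationary distribution exists and equals the sampling law of $\Xhat$. Second, one must reconcile the absolute value $|\theta_x|$ introduced by the \textit{Taylor Mapping} with the \emph{signed} bias $\theta_x<0$ coming from Theorem~\ref{thm:bm_map}; since the Taylor Mapping is only an approximation of the \ac{BM}, its stationary ON-probability need not match the \ac{BM} posterior pointwise. The crux is therefore that the \ac{MAP} property demands only the correct location of the $\tfrac12$-level set, not pointwise agreement, and the concluding integrality argument, which relies on $\threshold$ being an integer, is precisely what forces the approximate threshold crossing to coincide exactly with the required one.
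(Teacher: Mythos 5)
Your proof is correct and takes essentially the same route as the paper's: both fix $\Nrb=n$, reduce $\crn{TM,X}$ to a two-state chain whose stationary (detailed-balance) ON-probability is $\frac{1+n\wxy}{2+n\wxy+|\theta_x|}$, and solve $f_{\hat{X}}(n)\geq\frac{1}{2}$ to obtain $n\wxy\geq|\theta_x|$, hence $n\geq\threshold$ under the parameter choice $\theta_x=-(\threshold-\frac{1}{2})\wxy$. Your explicit justification of the catalytic role of the $\Yon{i}$ species and of the final half-integer/integrality step only makes rigorous what the paper leaves implicit.
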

\begin{proof}
    Let $\theta_x=-(\threshold-\frac{1}{2})\wxy$ and $W_{y_i,x}=\wxy$ as in the proof of Theorem~\ref{th:bm_map_capability} and assume $\Nrb$ is constant.
    To proof the \ac{MAP} property, we need to compute the steady-state probability to observe an $\Xhaton$ molecule. To this end, we first sum up all the rate constants for all possible reactions converting $\Xhatoff$ to $\Xhaton$ in (\ref{eq:BM_CRN}). This yields $k(1+\Nrb \wxy)$. On the other hand, $\Xhaton$ is converted to $\Xhatoff$ with rate constant $k(1+|\theta_x|)$.
    Hence, the following detailed balance condition holds in steady-state
    \begin{equation}
        \Pr[\Xhaton|\Nrb] \cdot k(1+|\theta_x|) = \Pr[\Xhatoff|\Nrb] \cdot k(1+\Nr \wxy),\label{eq:proof_thm2}
    \end{equation}
    where $\Pr[\Xhaton|\Nrb]$ denotes the probability to observe $\Xhaton$ at any given point in time if $\Nrb$ receptors are bound. By substituting $\Pr[\Xhatoff|\Nrb]=1-\Pr[\Xhaton|\Nrb]$ in \eqref{eq:proof_thm2}, we obtain
    \begin{equation}
        \Pr[\Xhaton|N_{r,b}] = \frac{1+\Nrb \wxy}{2+\Nrb \wxy+|\theta_x|}.
    \end{equation}
    Solving $\Pr[\Xhaton|\Nrb]\geq\frac{1}{2}$ for $\Nrb \wxy$ yields
    \begin{equation}
        \Nrb\wxy \geq |\theta_x|.
    \end{equation}
    Since $\theta_x=-(\threshold-\frac{1}{2})\wxy$, $\Pr[\Xhaton|\Nrb]\geq0.5$ holds \ac{iff} $\Nrb \geq\threshold$ receptors are bound.
\end{proof}

Theorem~\ref{th:bm_map_mc} confirms that \ac{MAP} detection can be performed based on observations of $\Xhaton$ generated according to \eqref{eq:BM_CRN}.
Specifically, if the detected symbol $\xhat_M$ is obtained from $M$ samples of $\Xhaton$, $[\Xhaton_1,\ldots,\Xhaton_M] \in \lbrace0,1\rbrace^M$, as
\begin{equation}
    \xhat_M = \begin{cases}
      1\quad,\quad\textrm{if}\, \frac{1}{M}\sum_{i=1}^{M} \Xhaton_i \geq \frac{1}{2}\\
      0\quad,\quad\textrm{otherwise},
    \end{cases}\label{eq:det_from_samples}
\end{equation}
then $\lim\limits_{M \to \infty} \xhat_M = \xhat^{\textrm{MAP}}$.

The \ac{MAP} detector proposed in this section does not require an analytical channel model because the \acp{BM} can be trained from data, that can be obtained, e.g., from simulations, alone. However, once deployed, its \ac{CRN} implementation cannot adapt to changes of $\threshold$ anymore. To remedy this, we propose an adaptive \ac{CRN}-based \ac{MAP} detector in the next section.
\section{Trainable Low-Complexity MAP Detectors}
\label{sec:low_complexity_MAP}
\subsection{A Custom MAP-Capable CRN}
For the \ac{CRN} $\mathcal{C}_{\mathrm{TM},\hat{X}}$ introduced in the previous section, the number of reactions scales linearly with the number of receptors $N_r$.
This complexity can be further reduced by exploiting that some of the reaction rate constants in \eqref{eq:BM_CRN} can be identical without sacrificing the \ac{MAP} property of \eqref{eq:BM_CRN}, see also the proof of Theorem~\ref{th:bm_map_mc}. Consequently, in this section we introduce a \ac{CRN} $\crn{LC}$ that requires only a constant number of reactions. Furthermore, we also propose a learning algorithm that allows for online adaptation of the \ac{CRN}, i.e., $\threshold$ does not have to be known at design time, and $\threshold$ may even change during operation of the \ac{CRN}.

In the previous section, we used different chemical species $\Yon{i}$ and $\Yoff{i}$ for different receptors. However, this is actually not necessary because we assume that all receptors are identical in our system and the number of bound receptors $\Nrb$ contains all information required for \ac{MAP} detection (cf. Section~\ref{sec:system_model}). Thus, we assume from now on that all receptors are of the same species ($\Yon{}$ or $\Yoff{}$ depending on their state) and drop the index. 

Let us define the following chemical reaction in which $\Yon{}$ activates $\Xhatoff$ with reaction rate constant $\rrc{on}$
\begin{equation}
    \Yon{} + \Xhatoff \xrightarrow{\rrc{on}} \Yon{} + \Xhaton.\label{eq:MAP_CRN1}
\end{equation}

In addition, we introduce another chemical species, $\Won$, that deactivates $\Xhaton$ with reaction rate constant $\rrc{off}$ as follows
\begin{equation}
    \Won   + \Xhaton  \xrightarrow{\rrc{off}} \Won   + \Xhatoff \label{eq:MAP_CRN2}.
\end{equation}
Collecting reactions \eqref{eq:MAP_CRN1} and \eqref{eq:MAP_CRN2} in $\mathcal{R}_{\textrm{LC}}$, $\crn{LC}$ is defined as $\crn{LC} = (\lbrace\Yon{},\Xhatoff, \Xhaton\rbrace, \mathcal{R}_{\textrm{LC}}, \lbrace\rrc{on},\rrc{off}\rbrace)$.
The following theorem shows that $\crn{LC}$ is indeed a sensible definition.
\begin{theorem}\label{thm:custom_crn}
    Let $\Nwa$ denote the number of $\Won$ molecules. Then, for an appropriate value of $\Nwa$, $\crn{LC}$ as defined above has the MAP property. 
\end{theorem}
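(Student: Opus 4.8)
The plan is to mirror the steady-state argument from the proof of Theorem~\ref{th:bm_map_mc}. Because exactly one copy of the $\Xhat$-molecule is present at any time, its state evolves as a two-state \ac{CTMC} with states ON and OFF, and for fixed $\Nrb$ it suffices to determine the stationary probability of the ON state. First I would read off the transition rates from \eqref{eq:MAP_CRN1} and \eqref{eq:MAP_CRN2} under mass-action kinetics: the OFF$\to$ON transition has propensity $\rrc{on}\Nrb$, since the single $\Xhatoff$ molecule can react with any of the $\Nrb$ copies of $\Yon{}$, while the ON$\to$OFF transition has propensity $\rrc{off}\Nwa$, since the single $\Xhaton$ molecule can react with any of the $\Nwa$ copies of $\Won$.

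Next I would impose the steady-state balance condition for the two-state chain,
\begin{equation}
    \Pr[\Xhaton|\Nrb]\cdot\rrc{off}\Nwa = \Pr[\Xhatoff|\Nrb]\cdot\rrc{on}\Nrb,
\end{equation}
substitute $\Pr[\Xhatoff|\Nrb]=1-\Pr[\Xhaton|\Nrb]$, and solve to obtain
\begin{equation}
    \Pr[\Xhaton|\Nrb] = \frac{\rrc{on}\Nrb}{\rrc{on}\Nrb + \rrc{off}\Nwa}.
\end{equation}
The condition $\Pr[\Xhaton|\Nrb]\geq\tfrac{1}{2}$ then simplifies to the linear inequality $\rrc{on}\Nrb \geq \rrc{off}\Nwa$, i.e. $\Nrb \geq \rrc{off}\Nwa/\rrc{on}$, so the generative model induces an effective integer threshold on $\Nrb$.

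The remaining step is to match this induced threshold to the target $\threshold$ by choosing $\Nwa$. The \textbf{main subtlety} is that $\Nwa$ must be a non-negative integer whereas $\rrc{off}\Nwa/\rrc{on}$ is in general real-valued; hence I would argue that the \ac{MAP} property holds whenever $\threshold-1 < \rrc{off}\Nwa/\rrc{on} \leq \threshold$, which forces $\Pr[\Xhaton|\Nrb]\geq\tfrac{1}{2}$ exactly for $\Nrb\geq\threshold$. Since $\rrc{on}$ and $\rrc{off}$ are free positive constants, such an integer $\Nwa$ always exists; the cleanest instance is $\rrc{on}=\rrc{off}$ with $\Nwa=\threshold$, giving $\Pr[\Xhaton|\Nrb]=\Nrb/(\Nrb+\threshold)$ and thus the \ac{MAP} property immediately. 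I expect the only real care to lie in this integer-rounding boundary, together with a brief appeal to irreducibility of the two-state chain for $\Nrb>0$ to justify that the claimed stationary distribution is actually attained, as in \cite{poole2017_chemical_BMs}.
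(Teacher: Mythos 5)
Your proof is correct and follows essentially the same route as the paper's: the paper likewise reads off the mass-action rates $\Nrb\cdot\rrc{on}$ and $\Nwa\cdot\rrc{off}$ for the two-state chain, imposes the same balance condition to get $\Pr[\Xhaton|\Nrb,\Nwa]=\Nrb/\bigl(\Nrb+\tfrac{\rrc{off}}{\rrc{on}}\Nwa\bigr)$, solves $\Pr[\Xhaton|\Nrb,\Nwa]\geq 0.5$ to obtain the threshold $\Nrb \geq \tfrac{\rrc{off}}{\rrc{on}}\Nwa$, and closes with the same instantiation $\tfrac{\rrc{off}}{\rrc{on}}=1$, $\Nwa=\threshold$. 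Your added care about the integer-rounding boundary and the irreducibility of the chain are minor refinements the paper omits, not a different argument.
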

\begin{proof}
    We take the same detailed balance-based approach to compute the steady-state probability of observing $\Xhaton$ as in the proof of Theorem~\ref{th:bm_map_mc}.
    Here, the rate with which $\Xhatoff$ is converted to $\Xhaton$ is given by $\Nrb \cdot \rrc{on}$.
    $\Xhaton$ is converted to $\Xhatoff$ with rate $\Nwa \cdot \rrc{off}$. 
    Hence, we obtain
    \begin{equation}
        \Pr[\Xhaton|\Nrb,\Nwa] = \frac{\Nrb}{\Nrb+\frac{\rrc{off}}{\rrc{on}}\Nwa}.\label{eq:MAP_CRN_Pr}
    \end{equation}
    Solving $\Pr[\Xhaton|\Nrb,\Nwa] \geq 0.5$ for $\Nrb$ yields $\Nrb \geq \frac{\rrc{off}}{\rrc{on}}\Nwa$.
    For example, for $\frac{\rrc{off}}{\rrc{on}}=1$, the conditions for the \ac{MAP} property are fulfilled if $\Nwa=\threshold$.
\end{proof}
Based on observations of $\Xhaton$, $\xhat$ is obtained as in \eqref{eq:det_from_samples}.
As we have seen in the proof of Theorem~\ref{thm:custom_crn}, the number of $\Won$ molecules, $\Nwa$, determines the steady-state distribution of $\crn{LC}$.
In the following section, we will exploit this to design a learning rule for $\crn{LC}$.

\subsection{Pilot Symbol-based Learning Rule}
In order to learn the optimal value of $\Nwa$ online, pilot symbols $\xtrue \in \lbrace 0,1 \rbrace$ are transmitted from the transmitter to the receiver in addition to the data symbols.
The pilot symbols are represented by the chemical species $\XtrueOff$ and $\XtrueOn$, i.e., one $\XtrueOff$ molecule is present in the receiver \ac{iff} a $0$ has been transmitted and one $\XtrueOn$ molecule \ac{iff} a $1$ has been transmitted.
Based on the pilot symbols, a learning rule for $\Nwa$ is proposed as follows.
For each $\xtrue$, if $\xhat \neq \xtrue$, $\Nwa$ is adapted towards the value of $\Nwa$ required to achieve \ac{MAP} performance, cf.~Theorem~\ref{thm:custom_crn}.
This leads to the following update rule for $\Nwa$ after the transmission of the $l$-th pilot symbol
\begin{equation}
    \Nwa[l+1] =    \begin{cases}
                        \Nwa[l] & \text{,\;\;\;if}\;\; \xhat=\xtrue \\
                        \Nwa[l]+1 & \text{,\;\;\;if}\;\; \xhat=1 \text{ and } \xtrue = 0 \\
                        \Nwa[l]-1 & \text{,\;\;\;if}\;\; \xhat=0 \text{ and } \xtrue = 1
                    \end{cases}.
                    \label{eq:crn_learning_rule}
\end{equation}
For the practical realization of \eqref{eq:crn_learning_rule}, a reservoir of $\Nwi$ inactive weight molecules $\Woff$ that can be converted into $\Won$ molecules (and vice versa) is introduced into the receiver cell.
At any point in time, $\Nwa+\Nwi=\Nw$, where $\Nw \in \mathbb{N}$ remains constant over time.
Here, $\Nw$ should be chosen large enough such that \ac{MAP} detection can be realized for any possible value of $\threshold$ by $\crn{LC}$.
Since $\threshold \leq \Nr$, one could choose any $\Nw\geq\left\lceil\frac{\rrc{off}}{\rrc{on}}\Nr\right\rceil$.

Learning rule \eqref{eq:crn_learning_rule} is readily implemented by the following chemical reactions
\begin{align}
        \Xhaton + \XtrueOff + \Woff &\xrightarrow{\rrc{u,1}} \Xhaton + \Won,\label{eq:CRN_learning_1}\\
        \Xhatoff + \XtrueOn + \Won &\xrightarrow{\rrc{u,2}} \Xhatoff + \Woff,\label{eq:CRN_learning_2}
\end{align}
where $\rrc{u,1}$ and $\rrc{u,2}$ are the respective reaction rate constants.
In \eqref{eq:CRN_learning_1}, the simultaneous presence of one $\Xhaton$ and one $\XtrueOff$ molecule activates one weight molecule, corresponding to the case $\xhat = 1$, $\xtrue = 0$ in \eqref{eq:crn_learning_rule}.
Analogously, \eqref{eq:CRN_learning_2} implements the case $\xhat = 0$, $\xtrue = 1$ in \eqref{eq:crn_learning_rule}.
Since the $\XtrueOff$ ($\XtrueOn$) molecule is consumed in \eqref{eq:CRN_learning_1} (\eqref{eq:CRN_learning_2}), it is guaranteed that exactly one weight molecule is activated (deactivated) per pilot symbol for sufficiently long update intervals.
\section{Performance Evaluation}
\label{sec:performance_evaluation}

In this section, the performance of the receivers proposed in Sections~\ref{sec:BM_MAP} and \ref{sec:low_complexity_MAP} is evaluated for a specific MC system.
Since the proposed receivers do not rely on a specific channel model, they are readily applicable to other \ac{MC} channels than the one considered here as long as the \ac{MAP} detector coincides with a threshold detector.
The detector proposed in Section~\ref{sec:low_complexity_MAP} can even be applied if no channel model is available at the design time.

\subsection{Evaluation Setup}
We evaluate our proposed detectors using a similar channel model as in \cite{kuscu2018_ml_detection}. Namely, the ligand concentration for a transmitted symbol $x$ around the receiver is given by
\begin{equation}
    \cx = \cn + \deltac \cdot x\label{eq:concentration},
\end{equation}
where $\cn$ and $\deltac$ denote the concentration due to the expected background noise and the concentration increase due to the release of molecules if $x=1$ is transmitted, respectively.

The receiver cell uses $\Nr \in \{30,50\}$ \ac{iid} receptors to estimate the ligand concentration and thus the transmitted symbol $x$. From \cite{kuscu2018_ml_detection}, we obtain the binding probability of receptor $i$ for $x$ as
\begin{equation}
    \Pr[Y_i=1|X=x] = \frac{c_x}{c_x + \frac{k_-}{k_+}}.
\end{equation}
Here, $k_+$ and $k_-$ are the binding and unbinding rate constants between ligand and receptor, respectively.

For equiprobable symbols, the joint distribution $q_{\Z}(\z)=q_{\Y,X}(\y,x)$ is then given by
\begin{equation}
    q_{\Y,X}(\y,x) = \frac{1}{2} \prod_{i=1}^{\Nr} \Pr[Y_i=y_i|X=x].\label{eq:system_model_joint_distribution}
\end{equation}

We choose $k_+=2 \cdot 10^{-19}\meter^3\second^{-1}$ and $k_-=20\second^{-1}$ as in \cite{kuscu2018_ml_detection}. We also consider the system model for diffusive ligand propagation and instantaneous molecule release from \cite{kuscu2018_ml_detection}, given by
\begin{equation}
    c(\tau) = \frac{\gamma}{(4 \pi D \tau)^{3/2}}  \exp\left(-\frac{d^2}{4 D \tau}\right).
\end{equation}
Here, $d$, $D$, $\gamma$ and $\tau$ denote the distance between receiver and transmitter, the diffusion coefficient, the number of released molecules, and the time since the molecule release, respectively. For $\gamma=10^3$, $D=10^{-10}\frac{ \si{\square\metre} }{\si{\second}}$, and $d=0.75\si{\micro\metre}$, $c(\tau)$ has a peak value in the order of $10^{20} \frac{\text{molecules}}{\si{\cubic\metre}}$. Thus, we choose $\deltac=1.5 \cdot 10^{20} \frac{\text{molecules}}{\si{\cubic\metre}}$. For the noise levels, we assume two scenarios. The first one with $\cn=2.5 \cdot 10^{19} \frac{\text{molecules}}{\si{\cubic\metre}}$ and the second one with $\cn=1.0 \cdot 10^{19} \frac{\text{molecules}}{\si{\cubic\metre}}$.

\subsection{Training Boltzmann Machines}
\begin{figure}
    \centering
    \includegraphics[width=0.5\textwidth]{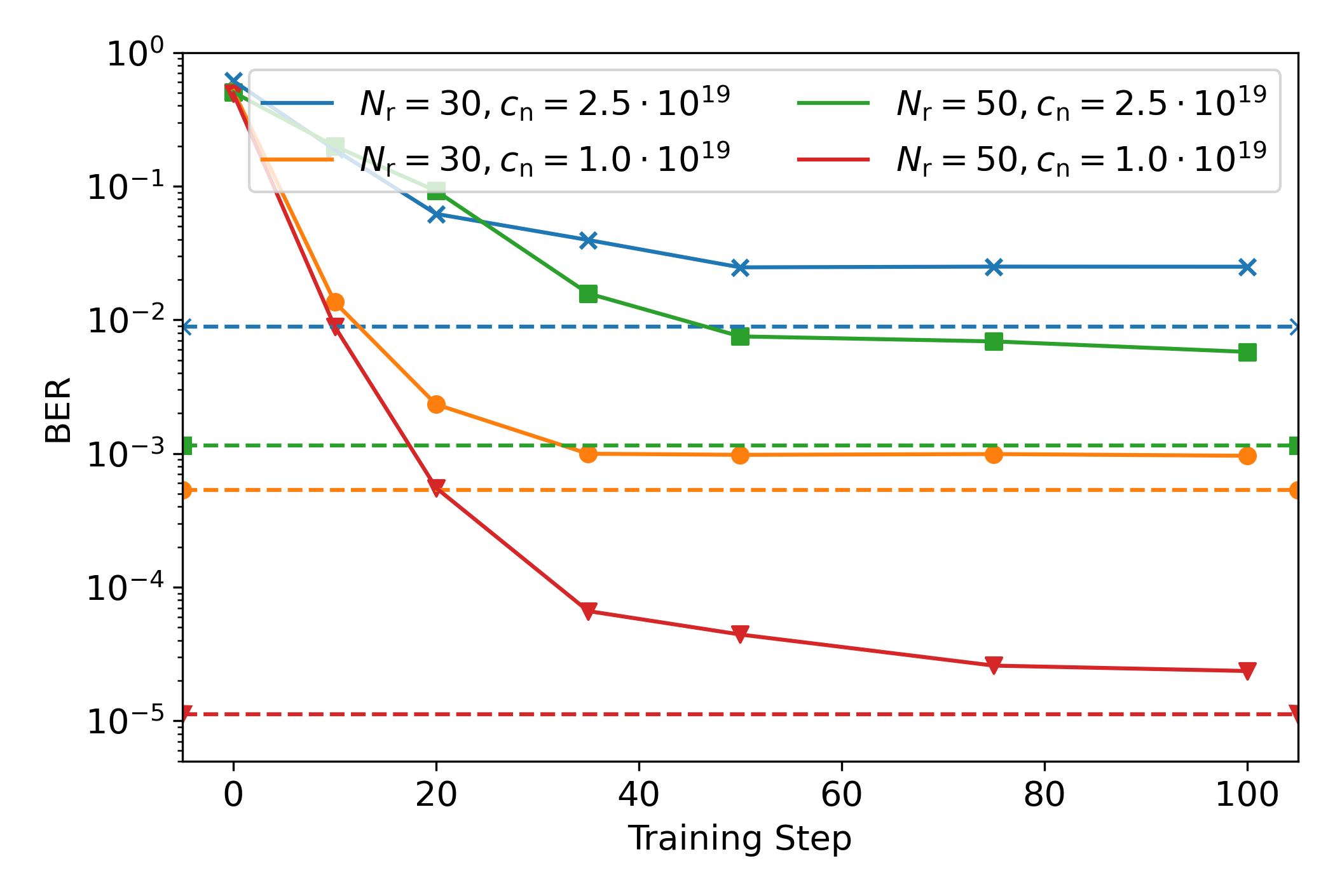}
    \vspace*{-10mm}
    \caption{Average \acp{BER} as a function of the training step for the four considered scenarios (solid lines). After a fast initial decay, the \acp{BER} approach the optimal MAP performance (dashed lines) slowly.}
    \label{fig:BER_time}
    \vspace*{-5mm}
\end{figure}
We train the \acp{BM} with estimated expectations $\expectation{q_{\Y,X}}{\z}$ and $\expectation{q_{\Y,X}}{\z \z^\transpose}$ based on $10^4$ samples from (\ref{eq:system_model_joint_distribution}).
In each training step, the first- and second-order moments of the \ac{BM} for the current biases and weights need to be estimated. To this end, $10^4$ Gibbs samples are generated using the Gibbs sampling algorithm \cite{mackay2003information}. This algorithm exploits that (\ref{eq:bm_conditional_probability}) can be easily evaluated and returns samples distributed according to the Boltzmann distribution $p_{\hat{X},\Y}(\xhat,\y)$. From the obtained Gibbs samples, we then compute the expectations $\expectation{p_{\hat{X},\Y}}{\z}$ and $\expectation{p_{\hat{X},\Y}}{\z \z^\transpose}$.

Similar as in deep learning \cite{Goodfellow-et-al-2016}, an adaptive learning rate $\eta[l]$ is used. In training steps $l\in\{0,\dots,19\}$, $\eta[l]=1.0$, for $l\in\{20,\dots,49\}$, $\eta[l]=0.5$, and for $l\in\{50,\dots,99\}$, $\eta[k]=0.1$.

For the initialization of the weights, first the matrix $\W'=\frac{1}{2}(\V+\V^\transpose) \in \mathbbm{R}^{(\Nr+1) \times (\Nr+1)}$ is defined, where the entries of random matrix $\V$ are \ac{iid} Gaussian \acp{RV} with zero mean and variance $\frac{1}{\Nr+1}$. The initial weight matrix $\W_0$ is obtained from $\W'$ by setting the diagonal entries to zero. To reduce the number of parameters, we also set all entries that capture only correlations among receptors to zero, as well, and we also do not update them during training.

We train 20 \acp{BM} for each scenario using this approach. In Fig.~\ref{fig:BER_time}, the obtained \acp{BER} are shown as a function of the training step. For each \ac{BM}, the \ac{BER} is computed by comparing $\hat{x}$ to $x$ until 100 errors are made. Then, the \acp{BER} of all \acp{BM} are averaged. For reference, the \ac{BER} obtained from the corresponding \ac{MAP} detectors are also shown in Fig.~\ref{fig:BER_time} (dashed lines).

Fig.~\ref{fig:BER_time} shows that the \acp{BER} achieved with \acp{BM} that were sufficiently trained (for more than 50 training steps) decrease as the number of receptors increases and increase as the background noise increases.
Furthermore, Fig.~\ref{fig:BER_time} confirms that the considered \acp{BM} approach close-to-\ac{MAP} performance as they are trained over more and more training steps.
The remaining gap between the \acp{BM}' performance and the \ac{MAP} detector visible in Fig.~\ref{fig:BER_time} results from the limited amount of data that is utilized to train the \acp{BM}.
With more data samples and longer training, the \ac{BM} would come even closer to \ac{MAP} performance.

\subsection{Convergence of CRN Online Learning}
\label{sec:performance_evaluation_crn_convergence}
\begin{figure}
    \centering
    \includegraphics[width=0.5\textwidth]{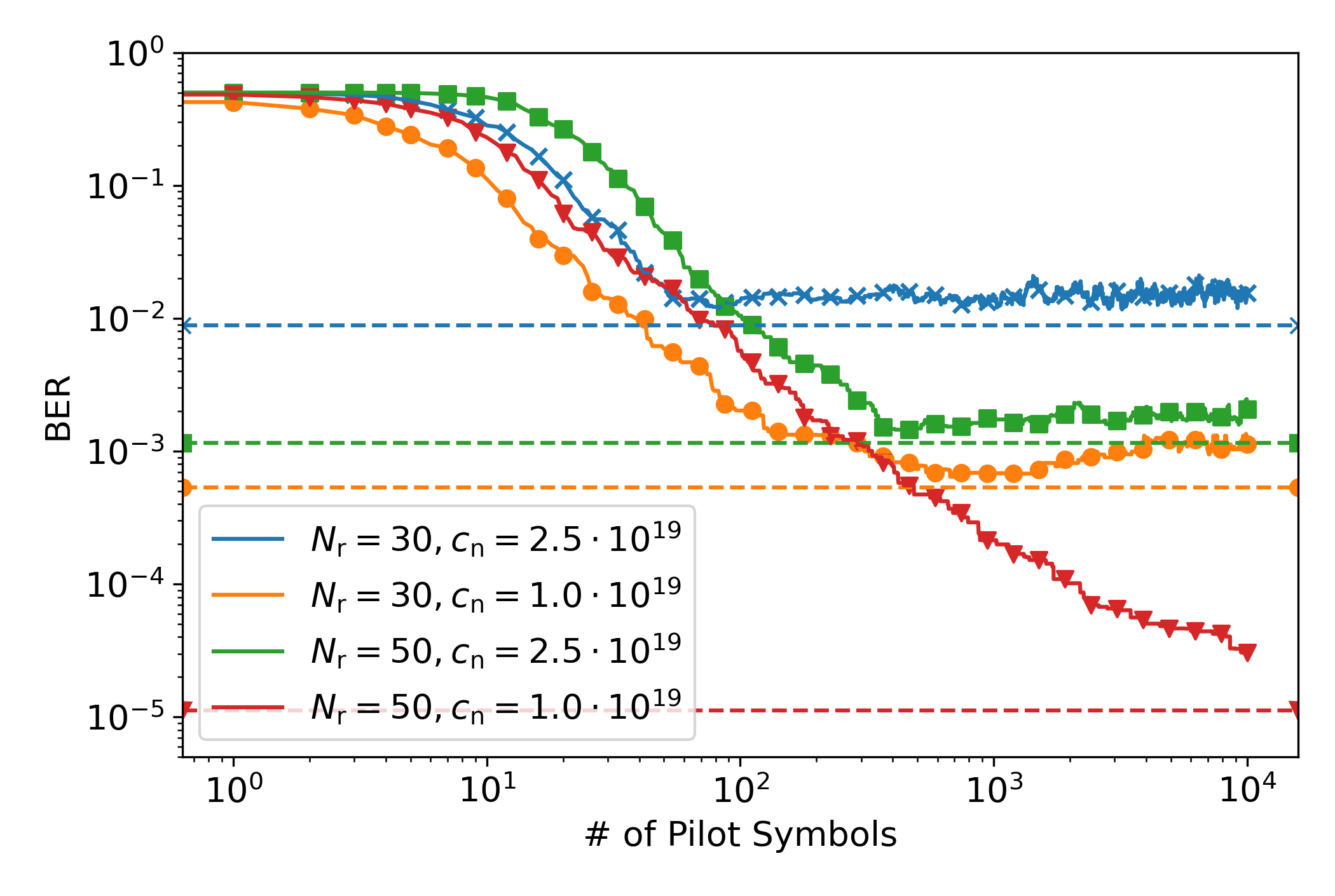}
    \vspace*{-10mm}
    \caption{Average \acp{BER} as obtained from $\crn{LC}$ as functions of the number of pilot symbols for the four considered scenarios (solid lines). All systems eventually approach MAP performance (dashed lines).}
    \label{fig:BER_online_learning}
    \vspace*{-5mm}
\end{figure}
Next, we investigate the performance of the adaptive \ac{CRN} proposed in Section~\ref{sec:low_complexity_MAP}. To this end, we employ (\ref{eq:crn_learning_rule}) with $\Nwa[0]=0$ as initial value.
After each pilot symbol, the \ac{BER} for the current value of $\Nwa$ is computed. This procedure is repeated 20 times for each scenario and the obtained \acp{BER} are averaged. In this way, it can be observed how many pilot symbols are necessary to achieve a certain performance. The corresponding \acp{BER} are shown in Fig.~\ref{fig:BER_online_learning}. 

We observe from Fig.~\ref{fig:BER_online_learning} that the \ac{CRN}-based \ac{MAP} detector proposed in Section~\ref{sec:low_complexity_MAP} is indeed able to learn the optimum decision threshold $\nu$ from the pilot symbols.
This is a very powerful feature, since the receiver can be deployed in \ac{MC} systems even if prior knowledge about the specific channel is not available.
Moreover, since the decision threshold $\nu$ is not hard-wired in the implementation of the receiver, it can even adapt if $\nu$ is time-varying.

However, learning rule \eqref{eq:crn_learning_rule} also introduces some suboptimality because it depends on individual observations of the receptor occupancy in the pilot symbols intervals. Because these are subject to noise, e.g., due to the random interactions between ligands and receptors, \eqref{eq:crn_learning_rule} can in some cases cause updates of $\Nwa$ that slightly decrease performance, like for the orange line in Fig.~\ref{fig:BER_online_learning} after $\approx10^3$ pilot symbols.

This performance degradation is an inherent property of the learning rule \eqref{eq:crn_learning_rule} because it is based on individual transmitted symbols; even if $\Nwa$ has the correct value for \ac{MAP} detection, it still can happen that $\xhat \neq \xtrue$ simply due to the intrinsic randomness of the channel or the receptor occupation. Then, $\Nwa$ is updated wrongly which causes suboptimal performance. In contrast to \ac{BM} learning, simply using more pilot symbols cannot remedy this, as long as each individual update of $\Nwa$ still relies on individual symbols. 

\subsection{Time-Variant Background Noise Levels}
\begin{figure}
    \centering
    \includegraphics[width=0.5\textwidth]{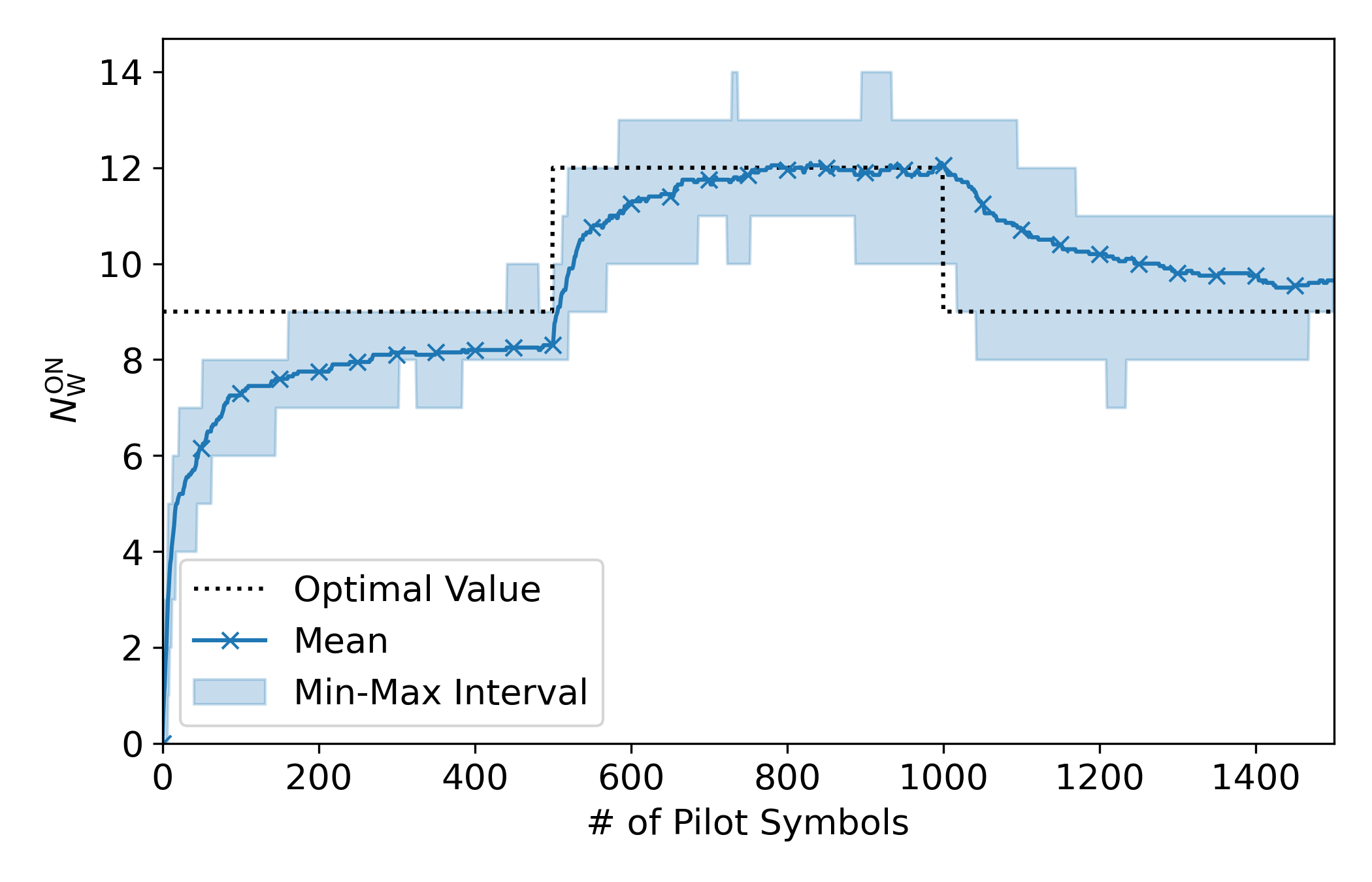}
    \vspace*{-10mm}
    \caption{Average number of $\Nwa$ over time (solid line) and interval with minimum and maximum value of $\Nwa$ (shaded area) for time-variant background noise levels. For reference, the optimal values are shown by the dotted black line.}
    \label{fig:BER_time_changing_csi}
    \vspace*{-5mm}
\end{figure}
In real-world applications, channel parameters can be time-variant, e.g., background noise levels could depend on the activity of other users in the channel. To explore the impact of this on the performance of the online-learning rule proposed in Section~\ref{sec:low_complexity_MAP}, we consider a receiver with $\Nr=30$ receptors and assume $\deltac=(1.5 \cdot 10^{20}-1.0 \cdot 10^{19}) \frac{\text{molecules}}{\meter^3}$. Initially, we set $\cn_1=1.0 \cdot 10^{19} \frac{\text{molecules}}{\meter^3}$ as noise level, change it after 500 pilot symbols to $\cn_2=2.5 \cdot 10^{19} \frac{\text{molecules}}{\meter^3}$ and after 500 more pilot symbols back to the original value.

The resulting average number of active weight molecules, representative of the detection threshold of the \ac{CRN}-based detector, is shown in Fig.~\ref{fig:BER_time_changing_csi} over time. The shaded area shows the lowest and highest observed values of $\Nwa$ for the considered 20 model realizations. We observe from Fig.~\ref{fig:BER_time_changing_csi} that $\Nwa$ follows the optimal values relatively closely over time. Still, the current learning rule requires a considerable amount of pilot symbols which indicates that the proposed detector is best suited for slowly changing channels and relatively high \acp{BER}.
\section{Conclusion}
\label{sec:conclusion}
In this paper, we have introduced two \acp{CRN} that can be used to realize \ac{MAP} detection for appropriately chosen reaction rate constants and molecule counts. In contrast to existing approaches, both detectors exploit the intrinsic stochastic fluctuations of chemical reactions.

The first detector exploits that \acp{BM} can be represented using \acp{CRN}. It can be trained offline using either an analytical channel model or simulations. We demonstrated for an example system model that even with relatively few training data close-to-\ac{MAP} performance can be achieved.
The second detector is based on a custom designed \ac{CRN} that can be trained with pilot symbols even after deployment in a cellular receiver. We showed that this detector achieves not only close-to-\ac{MAP} performance, but that it can even adapt to changes in the channel.
The proposed detectors present a step towards practically realizable and adaptive receivers for future nano-scale \ac{MC} systems.

In future research, the proposed adaptive \ac{CRN}-based detector could be improved by designing a learning rule that is more robust towards signal noise or has a larger step size for faster changing channels. Also, it would be interesting to generalize the proposed detectors to M-ary transmission schemes.

\printbibliography

\end{document}